\newtheorem{definition}{Definition}
\newtheorem{proposition}[definition]{Proposition}
\newtheorem{lemma}[definition]{Lemma}
\newtheorem{theorem}[definition]{Theorem}
\newtheorem{corollary}[definition]{Corollary}
\def\squareforqed{\hbox{\rlap{$\sqcap$}$\sqcup$}}
\def\qed{\ifmmode\squareforqed\else{\unskip\nobreak\hfil
\penalty50\hskip1em\null\nobreak\hfil\squareforqed
\parfillskip=0pt\finalhyphendemerits=0\endgraf}\fi}
\def\endenv{\ifmmode\;\else{\unskip\nobreak\hfil
\penalty50\hskip1em\null\nobreak\hfil\;
\parfillskip=0pt\finalhyphendemerits=0\endgraf}\fi}
\newenvironment{proof}{\noindent \textbf{{Proof~} }}{\hfill $\blacksquare$}
\newenvironment{remark}{\noindent \textbf{{Remark~}}}{}
\mathchardef\ordinarycolon\mathcode`\:
\def\vcentcolon{\mathrel{\mathop\ordinarycolon}}
\def\resetMathstrut@{%
	\setbox\z@\hbox{%
		\mathchardef\@tempa\mathcode`\[\relax
		\def\@tempb##1"##2##3{\the\textfont"##3\char"}%
		\expandafter\@tempb\meaning\@tempa \relax
	}%
	\ht\Mathstrutbox@\ht\z@ \dp\Mathstrutbox@\dp\z@}
\newcommand{\nc}{\newcommand}
\nc{\rnc}{\renewcommand}
\nc{\beg}{\begin{equation}}
\nc{\eeq}{{\end{equation}}}
\nc{\beqa}{\begin{eqnarray}}
\nc{\eeqa}{\end{eqnarray}}
\nc{\lbar}[1]{\overline{#1}}
\nc{\bra}[1]{\langle#1|}
\nc{\ket}[1]{|#1\rangle}
\nc{\ketbra}[2]{|#1\rangle\!\langle#2|}
\nc{\braket}[2]{\langle#1|#2\rangle}
\nc{\proj}[1]{| #1\rangle\!\langle #1 |}
\nc{\avg}[1]{\langle#1\rangle}
\nc{\Rank}{\operatorname{Rank}}
\nc{\smfrac}[2]{\mbox{$\frac{#1}{#2}$}}
\nc{\tr}{\operatorname{Tr}}
\nc{\ox}{\otimes}
\nc{\dg}{\dagger}
\nc{\dn}{\downarrow}
\nc{\cA}{{\cal A}}
\nc{\cB}{{\cal B}}
\nc{\cC}{{\cal C}}
\nc{\cD}{{\cal D}}
\nc{\cE}{{\cal E}}
\nc{\cF}{{\cal F}}
\nc{\cG}{{\cal G}}
\nc{\cH}{{\cal H}}
\nc{\cI}{{\cal I}}
\nc{\cJ}{{\cal J}}
\nc{\cK}{{\cal K}}
\nc{\cL}{{\cal L}}
\nc{\cM}{{\cal M}}
\nc{\cN}{{\cal N}}
\nc{\cO}{{\cal O}}
\nc{\cP}{{\cal P}}
\nc{\cQ}{{\cal Q}}
\nc{\cR}{{\cal R}}
\nc{\cS}{{\cal S}}
\nc{\cT}{{\cal T}}
\nc{\cV}{{\cal V}}
\nc{\cX}{{\cal X}}
\nc{\cY}{{\cal Y}}
\nc{\cZ}{{\cal Z}}
\nc{\cW}{{\cal W}}
\nc{\csupp}{{\operatorname{csupp}}}
\nc{\qsupp}{{\operatorname{qsupp}}}
\nc{\var}{{\operatorname{var}}}
\nc{\rar}{\rightarrow}
\nc{\lrar}{\longrightarrow}
\nc{\polylog}{{\operatorname{polylog}}}
\nc{\wt}{{\operatorname{wt}}}
\nc{\av}[1]{{\left\langle {#1} \right\rangle}}
\nc{\supp}{{\operatorname{supp}}}
\def\ve{\varepsilon}
\def\O{\Omega}
\nc{\RR}{{{\mathbb R}}}
\nc{\CC}{{{\mathbb C}}}
\nc{\FF}{{{\mathbb F}}}
\nc{\NN}{{{\mathbb N}}}
\nc{\ZZ}{{{\mathbb Z}}}
\nc{\PP}{{{\mathbb P}}}
\nc{\QQ}{{{\mathbb Q}}}
\nc{\UU}{{{\mathbb U}}}
\nc{\EE}{{{\mathbb E}}}
\nc{\id}{{\operatorname{id}}}
\nc{\CHSH}{{\operatorname{CHSH}}}
\nc{\be}{\begin{equation}}
\nc{\ee}{{\end{equation}}}
\nc{\bea}{\begin{eqnarray}}
\nc{\eea}{\end{eqnarray}}
\nc{\Hom}[2]{\mbox{Hom}(\CC^{#1},\CC^{#2})}
\nc{\rU}{\mbox{U}}
\nc{\ob}[1]{#1}
\nc{\SEP}{{\text{\rm SEP}}}
\nc{\NS}{{\text{NS}}}
\nc{\LOCC}{{\text{\rm LOCC}}}
\nc{\LLOCC}{{\text{\rm 1-LOCC}}}
\nc{\PPT}{{\text{\rm PPT}}}
\nc{\Rset}{{\text{\rm PPT}'}}
\nc{\EXT}{{\text{EXT}}}
\nc{\Sym}{{\operatorname{Sym}}}
\nc{\DLLOCC}{\text{\rm D},\text{\rm 1-LOCC}}
\nc{\DLOCC}{\text{\rm D},\text{\rm LOCC}}
\nc{\DSEP}{\text{\rm D},\text{\rm SEP}}
\nc{\DPPT}{\text{\rm D},\text{\rm PPT}}
\nc{\Choi}{\text{Choi-Jamio\l{}kowski }}
\nc{\ERLO}{{E_{\text{r,LO}}}}
\nc{\ERLOCC}{{E_{\text{r,LOCC}}}}
\nc{\ERPPT}{{E_{\text{r,PPT}}}}
\nc{\ERLOCCinfty}{{E^{\infty}_{\text{r,LOCC}}}}
\nc{\Aram}{{\operatorname{\sf A}}}
\begin{document}
\title{\Large Non-asymptotic Entanglement Distillation}
\author{Kun Fang$^{1,2}$}
\email{kf383@cam.ac.uk}
\author{Xin Wang$^{1,3}$}
\email{xwang93@umd.edu}
\author{Marco Tomamichel$^{1}$}
\email{marco.tomamichel@uts.edu.au}
\author{Runyao Duan$^{4,1}$}
\email{duanrunyao@baidu.com}

\affiliation{$^1$Centre for Quantum Software and Information, Faculty of Engineering and Information Technology, University of Technology Sydney, NSW 2007, Australia}
%Faculty of Engineering and Information Technology
\affiliation{$^2$ Department of Applied Mathematics and Theoretical Physics,\\ University of Cambridge, Cambridge, CB3 0WA, UK}

\affiliation{$^3$Joint Center for Quantum Information and Computer Science, University of Maryland, College Park, Maryland 20742, USA}
%\affiliation{$^3$UTS-AMSS Joint Research Laboratory for Quantum Computation\\ and Quantum Information Processing, Academy of Mathematics and Systems Science, Chinese Academy of Sciences, Beijing 100190, China}
%\affiliation{$^3$UTS-AMSS Joint Research Laboratory for Quantum Computation\\ and Quantum Information Processing, Chinese Academy of Sciences, Beijing 100190, China}
\affiliation{$^4$Institute for Quantum Computing, Baidu Research, Beijing 100193, China}

\thanks{A preliminary version of this paper was accepted as a long talk presentation at the 17th Asian Quantum Information Science Conference (AQIS 2017).}

\begin{abstract}
 Entanglement distillation, an essential quantum information processing task,  refers to the conversion from multiple copies of noisy entangled states to a smaller number of highly entangled states. In this work, we study the non-asymptotic fundamental limits for entanglement distillation. We investigate the optimal tradeoff between the distillation rate, the number of prepared states, and the error tolerance. First, we derive the one-shot distillable entanglement under completely positive partial transpose preserving operations as a semidefinite program and demonstrate an exact characterization via the quantum hypothesis testing relative entropy. Second, we establish efficiently computable second-order estimations of the distillation rate for general quantum states. In particular, we provide explicit as well as approximate evaluations for various quantum states of practical interest, including pure states, mixture of Bell states, maximally correlated states and isotropic states. 

 % \center{\large{\textbf{\textcolor{blue}{Rough DDL: 26 Sep, submission DDL around Oct. 1}}}}
\end{abstract}

\maketitle

\section{Introduction}
% \subsection{Background}
% \kf{Motivation for entanglement distillation}

Quantum entanglement is a striking feature of quantum mechanics and a key ingredient in many quantum information processing tasks~\cite{Horodecki2009a}, including teleportation~\cite{Bennett1993}, superdense coding~\cite{Bennett1992}, and quantum cryptography~\cite{Bennett1984,Bennett1992,Ekert1991}. All these protocols necessarily rely on entanglement resources. It is thus of great importance to transform less entangled states into more suitable ones such as maximally entangled states $\Psi_k$. This procedure is known as \textit{entanglement distillation } or \textit{entanglement concentration}~\cite{Bennett1996c}.

% \kf{Distillable entanglement}

The task of entanglement distillation allows two parties (Alice and Bob) to perform a set of free operations $\O$, for example, local operations and classical communication (LOCC). The \textit{distillable entanglement} characterizes the optimal rate at which one can asymptotically obtain maximally entangled states from a collection of identically and independently distributed (i.i.d) prepared entangled states~\cite{Bennett1996, Rains1999a,Plenio2007}. The concise definition of distillable entanglement by the class of operation $\Omega$ can be given by
\begin{align}
E_{\text{D},\Omega}(\rho_{AB}) :=\sup \Big\{\,r \, \Big|\, \lim_{n \to \infty} \Big(\inf_{\Pi\in \Omega}  \|\Pi(\rho_{AB}^{\ox n})- \Psi_{2}^{\ox r\cdot n}\|_1\Big)=0 \,\Big\}.
\end{align}
Entanglement distillation from non-i.i.d prepared states has also been considered recently in~\cite{Waeldchen2015}.
Distillable entanglement is a fundamental entanglement measure which captures the resource character of quantum entanglement. Up to now, it remains unknown how to compute the distillable entanglement for general quantum states and various approaches \cite{Popescu1997,Vedral1998,Rains1999,Vidal2002,Rains2001,Horodecki2000a,Christandl2004,Leditzky2017,Wang2016} have been developed to evaluate this important quantity. In particular, the Rains bound \cite{Rains2001} as well as the squashed entanglement~\cite{Christandl2004} are arguably the best general upper bounds while the hashing bound \cite{Devetak2005d} is the best lower bound for the distillable entanglement.

% \kf{Finite resources}

The conventional approach to studying entanglement distillation is to consider the asymptotic limit (first-order asymptotics), assuming our access to an unbounded number of i.i.d copies of a quantum state. In a realistic setting, however, the resources are finite and the number of i.i.d prepared states is necessarily limited. More importantly, it is very difficult to perform coherent state manipulations over large numbers
of systems. Therefore, it becomes crucial to characterize how well we can faithfully distill maximally entangled states from a finite number of copies of the prepared states.
Since the first-order asymptotics are insufficient to give a precise estimation when $n$ is finite, it is necessary to consider higher order asymptotics. Specifically, we consider estimating the optimal distillation rate $r_n$ for $n$ copies of the state to the order $\sqrt{n}$, for example, 
$r_n = a + \frac{b}{\sqrt{n}} + O\big(\frac{\log n}{n}\big)$, where $a$ and $b$ are the so-called first and second-order asymptotics respectively. The first-order term $a$ determines the asymptotic rate of $r_n$ while the second-order term $b$ indicates how fast the rate $r_n$ converges to $a$.
% $r_n = a + b\, / \sqrt{n} + O\big(\log n / n\big)$
The second-order estimation is especially accurate for large blocklength $n$ where the higher-order term $O\big(\frac{\log n}{n}\big)$ is negligible. In particular, for practical use, it is desirable to find efficiently computable coefficients $a,b$.

The study of such non-asymptotic scenarios has recently garnered great interest in classical information theory (e.g., \cite{Polyanskiy2010,Hayashi2009,Tan2014}) as well as in quantum information theory (e.g., \cite{Wang2012,Renes2011,Berta2011,Tomamichel2013a,Leung2015c,Datta2013c,Matthews2014,Beigi2015,Tomamichel2015b,Tomamichel2016,Wang2016g,Cheng2017b,Chubb2017}).
Here we study the setting of entanglement distillation.  A non-asymptotic analysis of entanglement distillation will help us better exploit the power of entanglement in a realistic setting.  
Previously, the one-shot distillable entanglement was studied in~\cite{Buscemi2010b,Brandao2011a}. But these bounds are not known to be computable in general, which makes it difficult to apply them as experimental benchmarks. These one-shot bounds are not suitable to establish second-order estimations either. Datta and Leditzky studied the second-order estimation of distillable entanglement under LOCC operations for pure states \cite{Datta2015a}. Here, we go beyond their results by considering more general operations and states.

% \subsection{Summary of results}

The remainder of this paper is organized as follows. In Section~\ref{One-shot entanglement distillation} we study the one-shot entanglement distillation under completely positive partial transpose preserving operations and find that the one-shot rate is efficiently computable via a semidefinite program (SDP). Based on this SDP, we present an exact characterization of the one-shot rate via the quantum hypothesis testing relative entropy, which can be seen as a one-shot analog of the Rains bound. In Section~\ref{Non-asymptotic entanglement distillation} we investigate the entanglement distillation for $n$-fold tensor product states and provide second-order estimations for general quantum states. In Section~\ref{sec:examples} we apply our second-order estimations to various quantum states of practical interest, including pure states, mixture of Bell states, maximally correlated states and isotropic states.

\section{Preliminaries}

% \kf{Notation}

% \xw{Put all of them together or Introduce when use?}

In the following, we will frequently use symbols such as $A$ (or $A'$) and $B$ (or $B'$) to denote finite-dimensional Hilbert spaces associated with Alice and Bob, respectively. 
A quantum state on system $A$ is a positive operator $\rho_A$ with unit trace. The set of quantum states is denoted as $S(A):= \{\,\rho_A \geq 0 \,|\, \tr \rho_A = 1\,\}$. The set of subnormalized quantum states is denoted as $S_\leq(A):= \{\,\rho_A \geq 0 \,|\, 0 < \tr \rho_A \leq 1\,\}$.
We call a positive operator separable if it can be written as a convex combination of tensor product positive operators.
 
% \kf{Hierarchy of quantum operations}

A quantum operation is characterized by a completely positive and trace-preserving (CPTP) linear map.
There are several different classes of quantum operations we often use. We call a bipartite quantum operation LOCC if it can be realized by local operations and classical communication. If only one-way classical communication is allowed, say, classical information from Alice to Bob, we call it 1-LOCC. While LOCC, including 1-LOCC, emerges as the natural class of operations in many important quantum information tasks, its mathematical structure is complex and difficult to characterize~\cite{Chitambar2014}. Therefore we may consider larger but mathematically more tractable classes of operations.  
The operations most frequently employed beyond
LOCC are the so-called completely positive partial transpose preserving (PPT) operations and separable (SEP) operations. 
A bipartite quantum operation $\Pi_{AB \to A'B'}$ is said to be a PPT (or SEP) operation if its  Choi-Jamio\l{}kowski matrix $J_\Pi = \sum_{i,j,m,k} \ket{i_Aj_B}\bra{m_Ak_B} \ox \Pi(\ket{i_Aj_B}\bra{m_Ak_B})$ is positive under partial transpose (or separable) across the bipartition of $AA':BB'$, where $\{\ket{i_A}\}$ and $\{\ket{j_B}\}$ are orthonormal bases on Hilbert spaces $A$ and $B$, respectively. In particular, PPT operations can be characterized via semidefinite conditions~\cite{Rains2001}. The entanglement theory under PPT operations has been studied in the literature~(e.g.,~\cite{Audenaert2003,Wang2016d,Matthews2008,Wang2018b}) and offers the limitations of LOCC.

A well known fact is that the classes of above introduced operations obey the
strict inclusions~\cite{Bennett1999b,Horodecki2009a},
$\text{1-LOCC} \subsetneq \text{LOCC} \subsetneq  \text{SEP} \subsetneq \text{PPT}$.
As a consequence, for any quantum state $\rho_{AB}$ we have the following chain of inequalities,
\begin{align}
    E_{\DLLOCC}(\rho_{AB}) \leq E_{\DLOCC}(\rho_{AB}) \leq E_{\DSEP}(\rho_{AB}) \leq E_{\DPPT}(\rho_{AB}).
\end{align}
This allows us to use $E_{\DLLOCC}$ and $E_{\DPPT}$ as lower and upper bounds, respectively, for entanglement distillation under LOCC operations.

% \kf{Quantum hypothesis testing}

Quantum hypothesis testing is the task of distinguishing two possible states of a system, $\rho_0$ and $\rho_1$. Two hypotheses are studied: the null hypothesis $H_0$ is that the state is $\rho_0$; the alternative hypothesis $H_1$ is that the state is $\rho_1$. We are allowed to perform a measurement presented by the POVM $\{M, \1-M\}$ with corresponding classical outcomes $0$ and~$1$. If the outcome is $0$, we accept the null hypothesis. Otherwise, we accept the alternative one. The probabilities of \emph{type-I} and \emph{type-II} error are given by $\tr (\1-M) \rho_0$ and $\tr M \rho_1$, respectively. Quantum hypothesis testing relative entropy considers minimizing the type-II error while keeping type-I error within a given error tolerance. Specifically, it is defined as 
\begin{align}
\label{hypothesis testing definition}
D_H^\ve(\rho_0\|\rho_1) :=-\log\min \big\{ \tr M\rho_1 \,\big|\, 0\le M\le \1,\, 1-\tr M\rho_0\le\ve \big\}.
\end{align}
 Throughout the paper we take the logarithm to be base two unless stated otherwise. Note that $D_H^\ve$ is a fundamental quantity in quantum theory \cite{Helstrom1976,Hiai1991,Ogawa2000,Hayashi2017g} and can be solved by SDP---a powerful tool in quantum information theory with a plethora of applications (e.g., \cite{Doherty2002,Harrow2015,Berta2015a,Wang2016f,Kogias2015a,Li2017}).

For the convenience of the following discussion, we consider an extension of the quantum hypothesis testing relative entropy where the second argument $\rho_1$ is only restricted to be Hermitian (and not necessarily a positive semi-definite operator). We will also use the convention that $\log x = -\infty$ for $x \leq 0$ in case the optimal value $\tr M\rho_1 \leq 0$. Then Eq.~\eqref{hypothesis testing definition} is still a well-defined SDP. This extension is essential to obtain tight characterizations in this work as well as a related work on coherence distillation~\cite{Regula2017c}.

%%%%%%%%%%%%%%%%%%%%%%%%%%%%%%%%%%%%%%%%%%%%%%%%%%%%%%%%%%

%%%%%%%%%%%%%%%%%%%%%%%%%%%%%%%%%%%%%%%%%%%%%%%%%%%%%%%%%%
\section{One-shot entanglement distillation}
\label{One-shot entanglement distillation}

In this section, we consider distilling a maximally entangled state from a single copy of the resource state and study the tradeoff between the one-shot distillation rate and the fidelity of distillation. Since the distillation process cannot always be accomplished perfectly, we use the fidelity of distillation to characterize the performance of a given distillation task. Then the one-shot distillable entanglement
is defined as the logarithm of the maximal dimension of the maximally entangled state that we can obtain while keeping the infidelity of the distillation process within a given tolerance.

% \kf{definition of fidelity}
\begin{definition}
    For any bipartite quantum state $\rho_{AB}$, the fidelity of distillation under the operation class $\O$ is defined as~\cite{Rains2001}
    \begin{align}
        F_{\O}(\rho_{AB},k):=\max_{\Pi\in\O} \tr \Pi_{AB\to A'B'}(\rho_{AB})\Psi_k,
    \end{align}
    where $\Psi_k:=(1/k)\sum_{i,j=0}^{k-1} \ket{ii}\bra{jj}$ is the $k$-dimensional maximally entangled state and the maximization is taken over all possible operations $\Pi$ in the set $\O$.
\end{definition}

% \kf{definition of one-shot rate}
\begin{definition}
\label{EDPPT definition}
For any bipartite quantum state $\rho_{AB}$,
the one-shot $\ve$-error distillable entanglement under the operation class $\O$ is defined as
\begin{align}
\label{D PPT error}
E_{\text{\rm D},\O}^{(1),\ve}(\rho_{AB}) :=\log \max \Big\{\,k\in \mathbb{N}\, \Big| \, F_{\O}(\rho_{AB}, k)\ge 1-\ve\,\Big\}.
\end{align}
The asymptotic distillable entanglement is then given by the regularization:
\begin{align}
E_{\text{\rm D},\O}(\rho_{AB}) = \lim_{\ve\rightarrow 0} \lim_{n\rightarrow \infty} \frac{1}{n}E_{\text{\rm D},\O}^{(1),\ve}(\rho_{AB}^{\ox n}).
\end{align}
\end{definition}

% \kf{SDP for one-shot rate}
Due to the linear characterization of PPT operations~\cite{Rains2001}, the one-shot distillable entanglement under PPT operations can be easily computed via the following optimization. The main ingredients of the proof use the symmetry of the maximally entangled state and the spectral decomposition of the swap operator, which are standard techniques used in literatures, e.g.,~\cite{Rains2001,Audenaert2003,Matthews2008}. We present the detailed proof here for the sake of completeness.

\begin{proposition} 
\label{EDPPT SDP characterization}
For any bipartite quantum state $\rho_{AB}$ and error tolerance $\ve$, the one-shot distillable entanglement under PPT operations is given by
% \begin{align}\label{E 1 PPT}
%    E_{\DPPT}^{(1),\ve}(\rho_{AB}) = \log \max \Big\{\,\lfloor1/\eta\rfloor \,\Big|\,
%     0\le M_{AB} \le \1_{AB}, \tr \rho_{AB}M_{AB}\ge 1-\ve,
%    -\eta\1_{AB} \le M_{AB}^{T_{B}} \le  \eta\1_{AB} \Big\}.
% \end{align}
\begin{align}\label{E 1 PPT}
   E_{\DPPT}^{(1),\ve}(\rho_{AB}) = \log \max \ & \lfloor1/\eta\rfloor \notag \\
    \text{\rm s.t.} \ & \, 0\le M_{AB} \le \1_{AB},\, \tr \rho_{AB}M_{AB}\ge 1-\ve,\,
   -\eta\1_{AB} \le M_{AB}^{T_{B}} \le  \eta\1_{AB} .
\end{align}
\end{proposition}
\begin{proof}
    From the definition of the one-shot distillable entanglement, we have 
    \begin{align}\label{Lemmma 1.3 tmp1}
    E_{\rm D,PPT}^{(1),\ve}(\rho) = \log \max \Big\{\,k\in \mathbb N \ \Big|\ \tr \Pi_{AB\to A'B}(\rho_{AB})\Psi_k \geq 1-\ve,\, \Pi\in \text{PPT}\,\Big\}.
    \end{align}
    According to the Choi-Jamio\l{}kowski representation of quantum operations~\cite{CHOI1975285,JAMIOLKOWSKI1972275}, we can represent the output state of operation $\Pi_{AB\rightarrow A'B'}$ via its \Choi matrix $J_{\Pi}$ as
    \begin{align}
    \Pi_{AB\to A'B'}(\rho_{AB}) = \tr_{AB} (J_{\Pi} \cdot \rho_{AB}^T \ox \1_{A'B'}).
    \end{align}
    By straightforward calculations, we have
    \begin{align}
      \tr \Pi_{AB\to A'B'}(\rho_{AB})\Psi_k & = \tr \big[\tr_{AB} (J_{\Pi} \cdot \rho_{AB}^T \ox \1_{A'B'})\big]\Psi_k\\
     &  = \tr J_{\Pi}\cdot (\rho_{AB}^T \ox \1_{A'B'}) (\1_{AB} \ox \Psi_k) \\
     & = \tr J_{\Pi}\cdot (\1_{AB} \ox \Psi_k) (\rho_{AB}^T \ox \1_{A'B'}) \\
     & = \tr \big[\tr_{A'B'} J_{\Pi}\cdot (\1_{AB} \ox \Psi_k)\big] \rho_{AB}^T. \label{Lemmma 1.3 tmp2}
    \end{align}
    Recall that $\Pi$ is a PPT operation if and only if its \Choi matrix $J_{\Pi}$ satisfies~\cite{Rains2001}
    \begin{align} \label{Lemmma 1.3 tmp3}
    J_{\Pi} \geq 0,\quad \tr_{A'B'} J_{\Pi} = \1_{AB},\quad J_{\Pi}^{T_{BB'}} \geq 0.
    \end{align}
    Combining Eqs.~\eqref{Lemmma 1.3 tmp1}, \eqref{Lemmma 1.3 tmp2}, \eqref{Lemmma 1.3 tmp3} we have the optimization
    \begin{subequations}
    \label{ED PPT tmp}
    \begin{align}
    E_{\rm D,PPT}^{(1),\ve}(\rho) = \log  \max & \ k \\ 
   \text{\rm s.t.} & \tr \big[\tr_{A'B'} J_{\Pi}\cdot (\1_{AB} \ox \Psi_k)\big] \rho_{AB}^T \geq 1-\ve,\\
    &\ J_{\Pi} \geq 0,\ \tr_{A'B'} J_{\Pi} = \1_{AB},\ J_{\Pi}^{T_{BB'}} \geq 0.
    \end{align}
    \end{subequations}
    Suppose one optimal solution in optimization~\eqref{ED PPT tmp} is given by $\widetilde J_{\Pi}$. 
    Since $\Psi_k$ is invariant under any local unitary $U_{A'}\ox \overline U_{B'}$, i.e., $(U_{A'}\ox \overline U_{B'}) \Psi_k (U_{A'}\ox \overline U_{B'})^\dagger = \Psi_k$, we can verify that $(U_{A'}\ox \overline U_{B'}) \widetilde J_{\Pi} (U_{A'}\ox \overline U_{B'})^\dagger$ is also optimal. 
    Since any convex combination of optimal solutions remains optimal, we know that 
    \begin{align}
         \int dU (U_{A'} \ox \overline{U}_{B'}) \widetilde J_{\Pi} (U_{A'} \ox \overline{U}_{B'})^\dagger
     \end{align} 
     is optimal, where $dU$ is the Haar measure. According to Schur's lemma, the result of the above integral gives an optimal solution admitting the structure of $W_{AB} \ox \Psi_k + Q_{AB} \ox (\1 - \Psi_k)$ with certain linear operators $W_{AB}$ and $Q_{AB}$. Thus without loss of generality, we can restrict our consideration of the optimal \Choi matrix in the optimization~\eqref{ED PPT tmp} as
    \begin{align}\label{choi ansatz}
    J_{\Pi} = W_{AB} \ox \Psi_k + Q_{AB} \ox (\1 - \Psi_k) .
    \end{align}
    In the following we take Eq.~\eqref{choi ansatz} into the optimization~\eqref{ED PPT tmp} and perform some further simplifications. Denote $P_+$ and $P_-$ as the symmetric and anti-symmetric projections respectively.
    From the spectral decomposition, we know that $\Psi_k^{T_{B'}} = (P_+ - P_-)/k$ and  
    \begin{align}
        J_{\Pi}^{T_{BB'}} & = W_{AB}^{T_B} \ox \Psi_k^{T_{B'}} + Q_{AB}^{T_B}\ox (\1 - \Psi_k)^{T_{B'}} \\
        & = W_{AB}^{T_B} \ox  \frac{P_+ - P_-}{k}  + Q_{AB}^{T_B}\ox \frac{(k-1)P_+ + (k+1)P_-}{k}\\
        & = \Big[ W_{AB}^{T_B}+(k-1)Q_{AB}^{T_B}\Big] \ox \frac{P_+}{k} + \Big[-W_{AB}^{T_B}+(k+1)Q_{AB}^{T_B}\Big] \ox \frac{P_-}{k}.
    \end{align}
    Since $P_+$ and $P_-$ are positive and orthogonal to each other, we have $J_{\Pi}^{T_{BB'}} \geq 0$ if and only if 
    \begin{align}
    W_{AB}^{T_B}+(k-1)Q_{AB}^{T_B} & \geq 0,\\
    -W_{AB}^{T_B}+(k+1)Q_{AB}^{T_B} & \geq 0.
    \end{align}
     Note that $\tr \big[\tr_{A'B'} J_{\Pi}\cdot (\1_{AB} \ox \Psi_k)\big] \rho_{AB}^T = \tr  W_{AB} \rho_{AB}^T$.
     We can simplify the optimization~\eqref{ED PPT tmp} as 
    \begin{subequations}
    \begin{align}
    E_{\rm D,PPT}^{(1),\ve}(\rho) = \log \max &\ k \\
   \text{\rm s.t.} & \tr  W_{AB} \rho_{AB}^T\geq 1-\ve,\, W_{AB},\ Q_{AB} \geq 0\\
    &\ W_{AB}+(k^2-1)Q_{AB} = \1_{AB},\\
    & (1-k)Q_{AB}^{T_B} \leq W_{AB}^{T_B} \leq (1+k)Q_{AB}^{T_B}.
    \end{align}\end{subequations}
    Eliminating the variable $Q_{AB}$ via the condition $ W_{AB}+(k^2-1)Q_{AB} = \1_{AB}$ and taking $M_{AB} = W_{AB}^T$, $\eta = 1/k$, we obtain the desired result.
\end{proof}

\vspace{0.2cm}
\begin{remark}
    The optimization in the above proposition is not exactly an SDP due to the nonlinear objective function. However, to compute the one-shot distillable entanglement, we can first implement the SDP:
    \begin{align}\label{eq:SDP ED1PPT}
    \min \big\{\, \eta \, \big| \, 0\le M_{AB} \le \1_{AB},\, \tr \rho_{AB}M_{AB}\ge 1-\ve,\, -\eta\1_{AB} \le M_{AB}^{T_{B}} \le  \eta\1_{AB} \, \big\}
\end{align}
and obtain the optimal value $\eta_0$. Then the one-shot distillable entanglement is given by $\log \lfloor 1/\eta_0 \rfloor$. Since the second step is trivial, we will also call the Eq.~\eqref{E 1 PPT} an SDP characterization.

 We also note that it is possible to use the SDP of distillation fidelity in \cite{Rains2001} to obtain Eq.~\eqref{eq:SDP ED1PPT}. However, for a general task, the SDP of its optimal fidelity is not sufficient to ensure that the rate can also be characterized by an SDP. A counter-example can be given by the one-shot PPT-assisted entanglement dilution task~\cite{Matthews2008}.

\end{remark}
\vspace{0.2cm}

% \xw{Some words saying that we are now ready to present our main result of this section ...}

% \kf{distance characterization for one-shot rate}

% \kf{review of the result by Brandao}
The one-shot entanglement distillation under non-entangling operations was studied by Brand\~{a}o and Datta in~\cite{Brandao2011a}. Particularly they provided both lower and upper (non-matching) bounds of the one-shot distillable rate via the quantum hypothesis testing relative entropy. 
Based on the SDP formula in Proposition~\ref{EDPPT SDP characterization}, we are now ready to give a similar but exact characterization for distillation under PPT operations.

\begin{theorem} 
    \label{EDPPT DH relation}
    For any bipartite quantum state $\rho_{AB}$ and the error tolerance $\ve \in (0,1)$, it holds
    \begin{align}
    \label{EDPPT DH relation eq}
    E_{\DPPT}^{(1),\ve}(\rho_{AB}) = \min_{\substack{\|G^{T_B}\|_1 \leq 1\\ G = G^\dagger}} D_H^{\ve}(\rho_{AB}\|G_{AB}) - \delta,
    \end{align}
 where $\delta \in [0,1]$ is the least constant such that the r.h.s. is the logarithm of an integer.
\end{theorem}
\begin{proof}
    We first note the fact that $\delta = \log x - \log \lfloor x \rfloor \in [0,1]$ for any $x \geq 1$. Thus we can use the least constant $\delta \in [0,1]$ to adjust the r.h.s. of Eq.~\eqref{E 1 PPT} to be the logarithm of an integer without taking the floor function, i.e.,
    \begin{align}\label{SDP delta}
    E_{\DPPT}^{(1),\ve}(\rho_{AB}) = -\log \min \big\{\, \eta \, \big| \, 0\le M\le \1,\, \tr \rho M\ge 1-\ve,\, -\eta\1 \le M^{T_{B}} \le  \eta\1 \, \big\} - \delta.
\end{align}
    The main ingredient of the following proof is the norm duality between the trace norm and the operator norm. Denote the set $\cS_\rho := \left\{\,M_{AB} \,|\, 0\le M_{AB} \le \1_{AB},\, \tr \rho_{AB}M_{AB}\ge 1-\ve \,\right\}$.
    Then we have
    \begin{align}
    E_{\DPPT}^{(1),\ve}(\rho_{AB}) 
    &= -\log \min_{M \in \cS_\rho} \|M_{AB}^{T_{B}}\|_\infty - \delta\\
    & = -\log \min_{M \in \cS_\rho} \max_{\substack{\|G\|_1 \leq 1\\ G = G^\dagger}} \tr M_{AB}^{T_{B}}G_{AB} - \delta\\
    & = -\log \max_{\substack{\|G\|_1 \leq 1\\ G = G^\dagger}} \min_{M \in \cS_\rho}  \tr M_{AB}^{T_{B}}G_{AB} -\delta\\
    & = -\log \max_{\substack{\|G^{T_B}\|_1 \leq 1\\ G = G^\dagger}}  \min_{M \in \cS_\rho}  \tr M_{AB}G_{AB} - \delta\\
    & = \min_{\substack{\|G^{T_B}\|_1 \leq 1\\ G = G^\dagger}} -\log   \min_{M \in \cS_\rho}  \tr M_{AB}G_{AB} - \delta\\
    & = \min_{\substack{\|G^{T_B}\|_1 \leq 1\\ G = G^\dagger}}  D_H^{\ve}(\rho_{AB}\|G_{AB}) - \delta.
    \end{align}
    The first line follows from Eq.~\eqref{SDP delta} and $\|X\|_\infty = \min\{\,\eta\,|\, -\eta \1 \leq X \leq \eta \1\,\}$. The second line follows from the norm duality between the trace norm and the operator norm, i.e., $\|X\|_\infty = \max_{\|Y\|_1\leq 1, Y =Y^\dagger} \tr XY$. The third line uses the Sion minimax theorem \cite{Sion1958} to swap the minimization with the maximization. In the fourth line, we replace $G_{AB}$ with $G_{AB}^{T_B}$. The last line follows by definition.
\end{proof}

\vspace{0.2cm}
Compared to the result by Brand\~{a}o and Datta~\cite{Brandao2011a}, Theorem~\ref{EDPPT DH relation} gives an exact and complete characterization for one-shot entanglement distillation under PPT operations.
The proof technique of this theorem was later applied in coherence theory, where the one-shot distillable coherence under maximally incoherent operations is also completely characterized by the quantum hypothesis testing relative entropy~\cite{Regula2017c}. 
It is also worth noting that the second argument $G_{AB}$ in Eq.~\eqref{EDPPT DH relation eq} is not necessarily positive, with an explicit example presented in Appendix~\ref{c not positive}. Thus the extension to general Hermitian operators becomes crucial to obtain the exact characterization instead of non-matching bounds.

In term of the asymptotic distillable entanglement, the best known upper bound was given by the Rains bound~\cite{Rains2001,Audenaert2002}. That is, $E_{\DPPT}(\rho_{AB})\le R(\rho_{AB})$ with
\begin{align}\label{Rains bound definition}
    R(\rho_{AB}) = \min_{\substack{\|\sigma^{T_B}\|_1 \leq 1\\ \sigma_{AB} \geq 0}} D(\rho_{AB}\|\sigma_{AB}),
 \end{align}
 where the quantum relative entropy $D(\rho\|\sigma):= \tr \rho (\log \rho - \log \sigma)$ if $\supp\, \rho \subseteq \supp\, \sigma$ and $+\infty$ otherwise.
Theorem~\ref{EDPPT DH relation} can be seen as a one-shot analog of this result, since we can quickly recover the Rains bound through the quantum Stein's lemma~\cite{Hiai1991}.  

\begin{corollary}\label{corol: recover rains bound}
For any bipartite quantum state $\rho_{AB}$, it holds $E_{\DPPT}(\rho_{AB})\le R(\rho_{AB}).$
\end{corollary}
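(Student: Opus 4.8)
The plan is to obtain the corollary directly from the second-order upper bound in Eq.~(\ref{second order upper bound}) together with the definition of $E_\G$ in Definition~\ref{EDPPT definition}; that bound already does all the analytic work, so what remains is to take the two limits in the order prescribed by the definition and verify that the correction terms vanish. First I would divide both sides of Eq.~(\ref{second order upper bound}) by $n$ to obtain $\frac{1}{n}E_{\Gamma,\ve}^{(1)}(\rho_{AB}^{\ox n})\le R(\rho_{AB})+\sqrt{V_R(\rho_{AB})/n}\,\Phi^{-1}(\ve)+O(\log n)/n$.

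Next, fixing $\ve\in(0,1)$, I would send $n\to\infty$. Since $V_R(\rho_{AB})$ is a fixed finite number — the minimum in $R(\rho)=\min_{\sigma\in\text{PPT'}}D(\rho\|\sigma)$ is attained because the Rains set is bounded (any positive $\sigma$ in it satisfies $\tr\sigma=\tr\sigma^{T_B}\le\|\sigma^{T_B}\|_1\le1$) and closed, while any optimizer $\sigma\in\cS_\rho$ must satisfy $\supp\rho\subseteq\supp\sigma$ so that $V(\rho_{AB}\|\sigma_{AB})$ is finite — the second term scales as $n^{-1/2}$ and the third as $n^{-1}\log n$, both of which vanish. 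This yields $\lim_{n\to\infty}\frac{1}{n}E_{\Gamma,\ve}^{(1)}(\rho_{AB}^{\ox n})\le R(\rho_{AB})$ for each fixed $\ve$. Finally, since the right-hand side carries no $\ve$-dependence, taking $\ve\to0$ leaves it unchanged, and the definition of $E_\G$ gives $E_\G(\rho_{AB})=\lim_{\ve\to0}\lim_{n\to\infty}\frac{1}{n}E_{\Gamma,\ve}^{(1)}(\rho_{AB}^{\ox n})\le R(\rho_{AB})$, which is the claim.

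I do not expect any serious obstacle here. The only two points warranting a moment of care are the order of the limits and the finiteness of $V_R(\rho_{AB})$: because the definition performs $n\to\infty$ before $\ve\to0$, the $O(\log n)$ remainder (whose implicit constant may depend on $\ve$ through the chosen $\sigma$) is harmless, and the boundedness argument above ensures the $n^{-1/2}$ term genuinely decays rather than masking a divergence. Both are already guaranteed by the hypotheses of the preceding theorem, so the corollary follows immediately.
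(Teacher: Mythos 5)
Your proposal is correct and matches the paper's own argument, which simply divides the second-order upper bound by $n$ and takes $n\to\infty$ followed by $\ve\to 0$. The extra care you devote to the finiteness of $V_R(\rho_{AB})$ and the order of limits is sound but goes beyond what the paper states explicitly.
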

\begin{proof}
    Denote a minimizer of the Rains bound as $\sigma_{AB}$ and then $R(\rho_{AB}) = D(\rho_{AB}\|\sigma_{AB})$. According to Theorem~\ref{EDPPT DH relation}, we have 
    \begin{align}
        \lim_{\ve \to 0}\lim_{n \to \infty} \frac1n E_{\DPPT}^{(1),\ve}(\rho_{AB}^{\ox n}) \leq \lim_{\ve \to 0}\lim_{n \to \infty}  \frac1n D_H^\ve(\rho_{AB}^{\ox n}\|\sigma_{AB}^{\ox n}),
    \end{align}
    since $\sigma_{AB}^{\ox n}$ is a feasible solution. The l.h.s. gives $E_{\DPPT}(\rho_{AB})$ by definition while the r.h.s. converges to $D(\rho_{AB}\|\sigma_{AB})$ due to the quantum Stein's lemma~\cite{Hiai1991}. 
\end{proof}

\section{Non-asymptotic entanglement distillation}
\label{Non-asymptotic entanglement distillation}

In this section, we study the estimation of distillable entanglement for given $n$ copies of the resource state. We provide both lower and upper bounds that are efficiently computable for general quantum states. 

Before proceeding, we need to introduce some basic notations. 
The purified distance between two subnormalized quantum states is defined as 
$P(\rho,\sigma) := \sqrt{1-F^2(\rho,\sigma)}$ with the generalized fidelity $F(\rho,\sigma):=\|\rho^{1/2}\sigma^{1/2}\|_1 + \sqrt{(1-\tr \rho)(1-\tr \sigma)}$~\cite{Tomamichel2015b}.
% \begin{align}
%     P(\rho,\sigma) := \sqrt{1-F^2(\rho,\sigma)} \quad \text{with}\quad  F(\rho,\sigma):=\|\rho^{1/2}\sigma^{1/2}\|_1 + \sqrt{(1-\tr \rho)(1-\tr \sigma)}.
% \end{align}
Denote the $\ve$-ball around $\rho_{AB}$ as $\cB_\ve(\rho_{AB}) := \{\,\widetilde \rho_{AB} \in \cS_\leq(AB) \,|\, P(\rho_{AB},\widetilde \rho_{AB}) \leq \ve \,\}$. The smooth conditional max-entropy is defined as 
\begin{align}
H_\text{max}^\ve(A|B)_\rho := \inf_{\widetilde \rho_{AB} \in \cB_\ve(\rho_{AB})} \sup_{\sigma_B \in \cS(B)} \log F(\widetilde \rho_{AB}, \1_A \ox \sigma_B).
\end{align}
% $H_\text{max}^\ve(A|B)_\rho = \inf_{\widetilde \rho_{AB} \in \cB_\ve(\rho_{AB})} \sup_{\sigma_B \in \cS(B)} \log F(\widetilde \rho_{AB}, \1_A \ox \sigma_B)$.

The following lemma gives the second-order expansion of the quantum hypothesis testing relative entropy and the smooth conditional max-entropy. This lemma is crucial to obtain our second-order bounds.

\begin{lemma}
    \label{second order lemma}
    For any quantum states $\rho_{AB}$ and positive operator $\sigma_{AB}$, it holds~\cite{Tomamichel2013a,Li2014a}
    \begin{align}
    & D_H^\ve\left(\rho_{AB}^{\ox n}\|\sigma_{AB}^{\ox n}\right) = 
    nD(\rho_{AB}\|\sigma_{AB})+\sqrt{nV(\rho_{AB}\|\sigma_{AB})}\,\Phi^{-1}(\ve)+O(\log n),\\
    & H_{\max}^\ve(A^n|B^n)_{\rho^{\ox n}} = - n I(A\>B)_\rho - \sqrt{n V(A\>B)_\rho}\, \Phi^{-1}(\ve^2) + O(\log n),\end{align}
    where $V(\rho\|\sigma) := \tr \rho (\log \rho - \log \sigma)^2 - D(\rho\|\sigma)^2$ is the quantum information variance, $I(A\>B)_\rho := D(\rho_{AB}\|\1_A\ox \rho_B)$ is the coherent information, $V(A\>B)_\rho  :=  V(\rho_{AB}\|\1_A\ox \rho_B)$ is the coherent information variance and 
    $\Phi^{-1}$ is the inverse of cumulative normal distribution function.
\end{lemma}

% \kf{second-order bounds}

\begin{theorem}\label{second order theorem}
For any bipartite quantum state $\rho_{AB}$, the number of prepared states $n$, the error tolerance $\ve \in (0,1)$, and the operation class $\O \in \{\LLOCC,\LOCC,\SEP,\PPT\}$, it holds
\begin{align}\label{theorem: second_order}
 f(\rho,n,\ve) + O(\log n) \leq & \ E_{\text{\rm D},\O}^{(1),\ve} (\rho_{AB}^{\ox n})  \le g(\rho,n,\ve) + O(\log n),
\end{align}
where $f(\rho,n,\ve)$ and $g(\rho,n,\ve)$ are efficiently computable functions given by
\begin{align}
f(\rho,n,\ve) & = n  I(A\>B)_\rho  +  \sqrt{n V(A\>B)_\rho}\, \Phi^{-1}(\ve), \label{f}\\
g(\rho,n,\ve)& = n R(\rho_{AB}) +\sqrt{nV(\rho_{AB}\|\sigma_{AB})} \, \Phi^{-1}(\ve), \label{g}
\end{align}
and $\sigma_{AB}$ is any minimizer of the Rains bound, $\Phi^{-1}$ is the inverse of cumulative normal distribution function.
\end{theorem}
\begin{proof}
Due to the inclusion relations of the operation classes, we only need to show the upper bound for PPT operations and lower bound for 1-LOCC operations. Each second-order bound can be obtained by applying the corresponding one-shot bound to the $n$-fold tensor product state $\rho^{\ox n}$ and using the second-order expansion of the related entropies.  

For the second-order upper bound, the proof steps are similar to Corollary~\ref{corol: recover rains bound}. Denote $\sigma_{AB}$ as a minimizer of the Rains bound. We first have $E_{\rm D,PPT}^{(1),\ve}(\rho_{AB}^{\ox n}) \leq D_H^{\ve}(\rho_{AB}^{\ox n}\|\sigma_{AB}^{\ox n}) - \delta$ by taking a feasible solution $\sigma_{AB}^{\ox n}$ in Theorem~\ref{EDPPT DH relation} and $\delta \in [0,1]$. Instead of applying the quantum Stein's lemma here, we use the second-order expansion of the quantum hypothesis testing relative entropy in Lemma~\ref{second order lemma} and obtain 
\begin{align}
E_{\rm D,\PPT}^{(1),\ve}\left(\rho_{AB}^{\ox n}\right) \leq nD(\rho_{AB}\|\sigma_{AB})+\sqrt{nV(\rho_{AB}\|\sigma_{AB})}\,\Phi^{-1}(\ve)+O(\log n).
\end{align}

For the second-order lower bound, we adopt the one-shot hashing bound~\cite{Wilde2016c} that
\begin{align}
    E_{\text{D,1-LOCC}}^{(1),\ve}(\rho_{AB}) \geq -H_{\text{max}}^{\sqrt{\ve}-\eta}(A|B)_\rho + 4 \log \eta, \quad \text{with} \quad \eta \in [0,\sqrt{\ve}\big).
  \end{align} 
  For the state $\rho_{AB}^{\ox n}$, we choose $\eta = 1/\sqrt{n}$ and have the following result which holds for $n > 1/\ve$,
  \begin{align}
  E_{\text{D,1-LOCC}}^{(1),\ve}(\rho_{AB}^{\ox n}) \geq -H_{\text{max}}^{\sqrt{\ve}-1/\sqrt{n}}(A^n|B^n)_{\rho^{\ox n}} + 4 \log (1/\sqrt{n}).
  \end{align}
  Using the second-order expansion of the smooth conditional max-entropy in Lemma~\ref{second order lemma}, we have
  \begin{align}
E_{\text{D,1-LOCC}}^{(1),\ve}(\rho_{AB}^{\ox n}) & \geq n I(A\>B)_\rho + \sqrt{n V(A\>B)_\rho}\, \Phi^{-1}\big((\sqrt{\ve}-1/\sqrt{n}\,)^2\big)+ O(\log n).
  \end{align}
Note that $\Phi^{-1}$ is continuously differentiable around $\ve >0$. Thus $\Phi^{-1}\big((\sqrt{\ve}-1/\sqrt{n}\,)^2\big) = \Phi^{-1}(\ve) + O(1/\sqrt{n})$ and we have the desired result.
\end{proof}

\vspace{0.2cm}
Note that the second-order upper bound works for any minimizer $\sigma_{AB}$ of the Rains bound regardless of its uniqueness. Thus we can choose the one that gives the tightest result. Due to the higher-order term $O(\log n)$, the estimations in Theorem~\ref{second order theorem} will work better for large blocklength $n$ where the logarithmic term is negligible.

Since the Rains bound in Eq.~\eqref{Rains bound definition} is given by convex optimization, there are various methods to solve it numerically. We provide an algorithm in Appendix~\ref{Numerical estimation of Rains bound} which can be used to efficiently compute the Rains bound and output the minimizer operator. It is worth noting that our bounds are similar to the second-order bounds on quantum capacity in~\cite{Tomamichel2016}. But those bounds in~\cite{Tomamichel2016} are not easy to compute in general.

% \vspace{0.2cm}
% \begin{remark}
The difficulty to obtain good second-order estimations is to find suitable one-shot lower and upper bounds which lead to the same $\ve$ dependence in the term $\Phi^{-1}(\ve)$ after the second-order expansion. This is necessary to show the tightness of the second-order estimation for specific states. Our result in Theorem~\ref{EDPPT DH relation} and the one-shot lower bound in~\cite{Wilde2016c} coordinate well in this sense.
    There are other one-shot lower bounds \cite{Buscemi2013,Buscemi2010b}  which can be used to establish a second-order estimation. But they do not provide matching $\ve$ dependence with our second-order upper bound. For certain pure states, there exists a better one-shot lower bound in \cite{Buscemi2013}. But note that our bounds are already tight for general pure states up to the second-order terms (see Proposition~\ref{prop: pure state}).
% \end{remark}

% \begin{proposition}\label{second order lower}
%     For any bipartite quantum state $\rho_{AB} \in \cS(A\ox B)$ and infidelity tolerance $\ve \in (0,1)$,
%     \begin{equation}\label{second order lower bound}
%     E_{\to,\ve}^{(1)}(\rho_{AB}^{\ox n}) \geq n I(A\>B)_\rho + \sqrt{n V(A\>B)_\rho} \Phi^{-1}(\ve) + O(\log n).
%     \end{equation}
% \end{proposition}
% \begin{proof}
%     For $n$-fold tensor product state $\rho_{AB}^{\ox n}$, we choose $\eta = 1/\sqrt{n}$ and have the following result which holds for $n > 1/\ve$,
%     \begin{equation}
%     \begin{split}
%     E_{\to,\ve}^{(1)}(\rho_{AB}^{\ox n}) & \geq -H_{\text{max}}^{\sqrt{\ve}-\eta}(A^n|B^n)_{\rho^{\ox n}} + 4 \log \eta \\
%     & = - n H(A|B)_\rho + \sqrt{n V(A|B)_\rho} \Phi^{-1}((\sqrt{\ve}-1/\sqrt{n})^2) - 2 \log n + O(\log n)\\
%     & =  n I(A\>B)_\rho + \sqrt{n V(A\>B)_\rho} \Phi^{-1}(\ve) + O(\log n).
%     \end{split}
%     \end{equation}
%     The second line follows from the second-order expansion of the smooth conditional max-entropy in Lemma \ref{second order hmax}. The last line follows since $I(A\>B)_\rho = - H(A|B)_\rho$. Note that $\Phi^{-1}$ is continuously differentiable around $\ve >0$ and thus $\Phi^{-1}((\sqrt{\ve}-1/\sqrt{n})^2) = \Phi^{-1}(\ve) + O(1/\sqrt{n})$.
% \end{proof}

\section{Examples}\label{sec:examples}
% \xw{we can use subsection for different class of states}

% \xw{Some words saying that we study the bipartite states of interests in this section.}
In this section, we apply our second-order bounds to estimate the non-asymptotic distillable entanglement of some important classes of states, including pure states, mixtures of Bell states, maximally correlated states and isotropic states.

\subsection*{Pure states}

\begin{proposition}\label{prop: pure state}
    For any bipartite pure state $\psi_{AB}$, the number of prepared states $n$, the error tolerance $\ve \in (0,1)$, and the operation class $\O \in \{\LLOCC,\LOCC,\SEP,\PPT\}$, it holds 
    \begin{align}
\label{tight pure}
E_{\rm D,\O}^{(1),\ve}(\psi_{AB}^{\ox n}) = n S(\rho_A) +
\sqrt{n(\tr\rho_A (\log \rho_A)^2 - S(\rho_A)^2)}\, \Phi^{-1}(\ve) + O(\log n),
\end{align}
where $S(\rho):= -\tr \rho \log \rho$ is the von Neumann entropy and $\rho_A = \tr_B \psi_{AB}$.
\end{proposition}
\begin{proof}
Since all the quantities concerned are invariant under local unitaries, we only need to consider a pure state $\psi_{AB}$ with the Schmidt decomposition $\ket{\psi_{AB}} =\sum_i \sqrt{p_i} \ket{i_Ai_B}$. Then $\rho_A = \sum_i p_i \ket{i_A}\bra{i_A}$ and $\rho_B := \tr_A \psi_{AB} = \sum_i p_i \ket{i_B}\bra{i_B}$. Let $\sigma_{AB} =\sum_i p_i \ket{i_Ai_B}\bra{i_Ai_B}$. The following equalities are straightforward by calculation,
\begin{align}
 D(\psi_{AB}\|\sigma_{AB}) & = I(A\>B)_\psi = S(\rho_A), \\
 V(\psi_{AB}\|\sigma_{AB}) & = V(A\>B)_\psi = \tr\rho_A (\log \rho_A)^2 - S(\rho_A)^2.
\end{align}
We can first check that $\sigma_{AB}$ is a feasible solution for the Rains bound and thus $R(\psi_{AB}) \leq D(\psi_{AB}\|\sigma_{AB})$. Note that $I(A\>B)_\rho \leq R(\rho_{AB})$ holds for any quantum state $\rho_{AB}$. Thus we have $D(\psi_{AB}\|\sigma_{AB}) = I(A\>B)_\rho \leq R(\rho_{AB}) \leq D(\psi_{AB}\|\sigma_{AB})$, which implies that $\sigma_{AB}$ is a minimizer of the Rains bound. Finally applying Theorem~\ref{second order theorem}, we have the desired result.
\end{proof}

\vspace{0.2cm}
The second-order estimation for pure states has been given by Datta and Leditzky \cite{Datta2015a}, which was only for LOCC operations. It is known that the asymptotic distillable entanglement of a pure state coincides with the von Neumann entropy of its reduced state under 1-LOCC, LOCC, SEP or PPT operations~\cite{lo2001concentrating}. Proposition~\ref{prop: pure state} shows that not only are the asymptotic distillable entanglement (first-order asymptotics) the same under these four sets of operations but also their convergence speeds (second-order asymptotics).
% ~\footnote{We can further prove that for any pure state even the one-shot distillable entanglement is also the same under these four classes of operations.}.

\subsection*{Mixture of Bell states}

In laboratories, we usually obtain mixed states due to the imperfection of operations and decoherence. A common case is a noise dominated by one type of Pauli error \cite{Childress2006,Campbell2007}. Without loss of generality, we consider the phase noise, which results in the mixture of Bell states 
\begin{align}
    \rho_{\text{Bell}} = p \ket{v_1}\bra{v_1} + (1-p) \ket{v_2}\bra{v_2}, \quad 0 < p < 1,
\end{align} where $\ket{v_1} = (\ket{01} + \ket{10})/\sqrt{2}$ and $\ket{v_2} = (\ket{01} - \ket{10})/\sqrt{2}$. Let $\sigma_{AB} = (\ket{v_1}\bra{v_1} + \ket{v_2}\bra{v_2})/2$. Following similar proof steps as Proposition~\ref{prop: pure state}, we can check that $\sigma_{AB}$ is a minimizer of the Rains bound for $\rho_{\text{Bell}}$. Due to Theorem~\ref{second order theorem}, we have the following result.

\begin{proposition}\label{prop: bell state}
    For given quantum state $\rho_{\text{\rm Bell}}$, the number of prepared states $n$, the error tolerance $\ve \in (0,1)$, and the operation class $\O \in \{\LLOCC,\LOCC,\SEP,\PPT\}$, it holds 
    \begin{align}
\label{second order mixed state ex1}
    E_{\rm D,\O}^{(1),\ve}(\rho_{\text{\rm Bell}}^{\ox n}) = n (1-h_2(p)) +  
    \sqrt{np(1-p)(\log \frac{1-p}{p})^2}\, \Phi^{-1}(\ve) + O(\log n),
\end{align}
where $h_2(p) := -p \log p - (1-p) \log (1-p)$ is the binary entropy.
\end{proposition}

% \kf{States in probabilistic entanglement distillation paper and experimental relevant states?} 

\subsection*{Maximally correlated states}
Besides the mixture of Bell states presented above, we can also show the tightness of our second-order bounds for a broader class of states, i.e., maximally correlated states 
\begin{align}
    \rho_{\text{mc}} = \sum_{i,j= 0}^{d-1}\widetilde \rho_{ij} \ket{i_Ai_B}\bra{j_Aj_B},
\end{align}
where $\widetilde \rho_A = \sum_{i,j=0}^{d-1}\widetilde \rho_{ij} \ket{i_A}\bra{j_A}$ is a quantum state. Denote $\Delta(\cdot) = \sum_{i,j=0}^{d-1} \<i_Aj_B|\cdot|i_Aj_B\> \ket{i_Aj_B}\bra{i_Aj_B}$ as the completely dephasing channel on the bipartite systems. Let $\sigma_{AB} = \Delta(\rho_{\text{mc}})$. We can check that $\sigma_{AB}$ is a minimizer of the Rains bound for $\rho_{\text{mc}}$. Due to Theorem~\ref{second order theorem}, we have the following result.

\begin{proposition}\label{prop: mc state}
    For any maximally correlated state $\rho_{\text{\rm mc}}$, the number of prepared states $n$, the error tolerance $\ve \in (0,1)$, and the operation class $\O \in \{\LLOCC,\LOCC,\SEP,\PPT\}$, it holds 
\begin{align}
\label{second order maximally correlated state}
    E_{\rm D,\O}^{(1),\ve}(\rho_{\text{\rm mc}}^{\ox n}) = n D(\rho_{\text{\rm mc}} \,\|\Delta(\rho_{\text{\rm mc}})) + \sqrt{nV(\rho_{\text{\rm mc}}\,\|\Delta(\rho_{\text{\rm mc}}))}\, \Phi^{-1}(\ve) + O(\log n).
\end{align}
\end{proposition}

Note that if $\widetilde \rho_A$ is a pure state, the maximally correlated state $\rho_{\text{\rm mc}}$ will also reduce to a bipartite pure states. The class of maximally correlated states also contains the mixture of Bell states $\rho_{\text{Bell}}$. By direct calculations, we also have $D(\rho_{\text{\rm mc}} \,\|\Delta(\rho_{\text{\rm mc}})) = I(A\>B)_{\rho_{\text{\rm mc}}}$ and $V(\rho_{\text{\rm mc}}\,\|\Delta(\rho_{\text{\rm mc}})) = V(A\>B)_{\rho_{\text{\rm mc}}}$.

Furthermore, the coherence theory is closely related to the entanglement theory due to the one-to-one correspondence 
% \begin{align}
%     \rho_{AB} = \sum_{i,j=0}^{d-1} \widetilde \rho_{ij}\ket{i_Ai_B}\bra{j_Aj_B}  \iff \widetilde \rho_A = \sum_{i,j=0}^{d-1}\widetilde\rho_{ij}\ket{i_A}\bra{j_A},
% \end{align}
between $\rho_{\text{\rm mc}} = \sum_{i,j=0}^{d-1} \widetilde \rho_{ij}\ket{i_Ai_B}\bra{j_Aj_B}$ and $\widetilde \rho_A = \sum_{i,j=0}^{d-1}\widetilde\rho_{ij}\ket{i_A}\bra{j_A}$.
An interesting conjecture with a plethora of evidence is that any incoherent operation acting on a state $\widetilde \rho_A$ is equivalent to a LOCC operation acting on the associated maximally correlated state $\rho_{\text{\rm mc}}$~\cite{Winter2015a,Streltsov2016}. If this conjecture holds, Proposition~\ref{prop: mc state} will also give the second-order estimation for non-asymptotic coherence distillation.

\subsection*{Isotropic states}
Another common noise, in practice, is the so-called depolarizing noise \cite{Campbell2007,Nickerson2014}, which results in an isotropic state,
\vspace{-0.1cm}
\begin{align}
\rho_{F} = F\cdot \Psi_d + (1-F) \frac{\1 - \Psi_d}{d^2 -1}, \quad 0 \leq F \leq 1,
\end{align}
where $d$ is the local dimension of the maximally entangled state $\Psi_d$. The isotropic state is also the Choi-Jamio\l{}kowski state of the depolarizing channel $\cN(\rho) = p\rho + (1-p) \1/d$. Its $\LLOCC$ distillable entanglement is equal to the quantum capacity of the depolarizing channel \cite{Bennett1996c,Leditzky2018,leditzky2018useful}, the determination of which is still a big open problem in quantum information theory. Here we study the non-asymptotic distillable entanglement of this particular class of states. For small blocklength $n$ (e.g. $n \leq 100$), we can compute the exact distillation rate via a linear program. For large blocklength $n$ (e.g $n > 100$), we need to employ the second-order estimation in Theorem~\ref{second order theorem}.

Isotropic states possess the same symmetry as the maximally entangled states, which are invariant under any local unitary $U_A\ox \overline U_B$. Exploiting such symmetry, we can simplify the PPT-assisted distillable entanglement for the $n$-fold isotropic state as a linear program. We note that the optimal fidelity for $n$-fold isotropic states can also be simplified to a linear program, which has been studied by Rains in~\cite{Rains2001}. 
Here, we follow Rains' approach of utilizing the structure of isotropic states and focus on the distillable rate of $n$-fold isotropic states under a given infidelity tolerance.

\begin{proposition}   
For any $n$-fold isotropic state $\rho_F^{\ox n}$ with integer $n$ and the error tolerance $\ve$, its one-shot distillable entanglement under PPT operations $E_{\rm D,PPT}^{(1),\ve}(\rho_{F}^{\ox n})$ is given by
\begin{subequations}\label{ED PPT LP}
\begin{align}
  \log \ \  \max \ &\ \ \lfloor 1/\eta\rfloor \\ 
   \text{\rm s.t.}\   &\ \ 0 \leq m_i \leq 1, \ \forall \, i = 0,1,\cdots, n,\\
   &\ \sum\nolimits_{i=0}^n \binom{n}{i} F^i (1-F)^{n-i} m_i  \geq 1-\ve, \\
   &\ -\eta \leq \sum\nolimits_{i=0}^n x_{i,k} m_i \leq \eta, \ \forall \, k = 0,1,\cdots, n,
\end{align}
\end{subequations}
where the coefficients
\begin{align}
x_{i,k} = \frac{1}{d^n}\sum_{m=\max\{0,i+k-n\}}^{\min \{i,k\}} \binom{k}{m}\binom{n-k}{i-m} (-1)^{i-m} (d-1)^{k-m}(d+1)^{n-k+m-i}.
\end{align}
\end{proposition}
\begin{proof}  
The technique is very similar to the one we use in the proof of Proposition~\ref{EDPPT SDP characterization}.
Consider the $n$-fold isotropic state 
\begin{align}
\rho_F^{\ox n} = \sum_{i=0}^n f_i P_i^n(\Psi_d,\Psi_d^\perp),\ \text{with}\ f_i = F^i (\frac{1-F}{d^2-1})^{n-i}, \Psi_d^\perp = \1 - \Psi_d.
\end{align}
Here, $P_i^n(\Psi_d,\Psi_d^\perp)$ represents the sum of those $n$-fold tensor product terms with exactly $i$ copies of $\Psi_d$. For example,
$P_1^3(\Psi_d,\Psi_d^\perp) = \Psi_d^\perp \otimes \Psi_d^\perp \otimes \Psi_d + \Psi_d^\perp  \otimes \Psi_d \otimes \Psi_d^\perp + \Psi_d \otimes \Psi_d^\perp \otimes \Psi_d^\perp$.
Suppose $M$ is the optimal solution of the optimization  
\begin{align} \label{LP tmp 2}
E_{\rm D,PPT}^{(1),\ve}(\rho_{F}^{\ox n})=\log \max \Big\{\,\lfloor 1/\eta \rfloor\ \Big|\ 0\le M \le \1, \tr M \rho_{F}^{\ox n}\ge 1-\ve,
-\eta\1 \le M^{T_{B}} \le  \eta\1\,\Big\}.\end{align}
Then for any local unitary $U = \bigotimes_{i=1}^n \big(U_A^i\ox \overline{U}_B^i\big)$ where $i$ denotes the $i$-th copies of corresponding system, $U M U^\dagger$ is a also optimal solution. Convex combinations of optimal solutions are also optimal. So we can take the optimal solution $M$ to be an operator which is invariant under any local unitary $\bigotimes_{i=1}^n \big(U_A^i\ox \overline{U}_B^i\big)$. Moreover, since $\rho_{F}^{\ox n}$ is invariant under the symmetric group acting by permuting the tensor factors, we can take the optimal solution $M$ of the form $ \sum_{i=0}^n m_i P_i^n(\Psi_d,\Psi_d^\perp)$ without loss of generality.
 
Since $P_i^n(\Psi_d,\Psi_d^\perp)$ are orthogonal projections, the operator $M$ has eigenvalues $\{m_i\}_{i=0}^n$ without considering degeneracy. Next, we will need to know the  eigenvalues of $M^{T_B}$. Decomposing operators $\Psi_d^{T_B}$ and ${\Psi_d^{\perp}}^{T_B}$ into orthogonal projections, i.e.,
\begin{align}
    \Psi_d^{T_B} = \frac{1}{d}(P_+ - P_-), \quad {\Psi_d^{\perp}}^{T_B} = (1-\frac{1}{d})P_+ + (1+\frac{1}{d})P_-
\end{align} where $P_+$ and $P_-$ are symmetric and anti-symmetric projections respectively and collecting the terms with respect to $P_k^n(P_+,P_-)$, we have
\begin{align}
M^{T_B} & = \sum_{i=0}^n m_i P_i^n \Big(\Psi_d^{T_B},{\Psi_d^\perp}^{T_B}\Big)\\
& = \sum_{i=0}^n m_i \Big(\sum_{k=0}^n x_{i,k} P_k^n(P_+,P_-) \Big)\\
& = \sum_{k=0}^n \Big(\sum_{i=0}^n x_{i,k} m_i\Big)  P_k^n(P_+,P_-).
\end{align}
Since $P_k^n(P_+,P_-)$ are also orthogonal projections, $M^{T_B}$ has eigenvalues $\{t_k\}_{k=0}^n$ without considering degeneracy, where $t_k = \sum_{i=0}^n x_{i,k} m_i$.
As for the condition $\tr M \rho_{F}^{\ox n} \ge 1-\ve$, we have
\begin{align}
\tr M \rho_F^{\ox n} & = \tr \sum_{i=0}^n f_i m_i  P_i^n(\Psi_d, \Psi_d^\perp)\\
& = \sum_{i=0}^n f_i m_i \binom{n}{i} (d^2 - 1)^{n-i}\\
& = \sum_{i=0}^n \binom{n}{i} F^i (1-F)^{n-i} m_i. \label{LP tmp eq}
\end{align}
Taking Eq.~\eqref{LP tmp eq} and the eigenvalues of $M$, $M^{T_B}$ into Eq.~\eqref{LP tmp 2}, we have the desired result.
\end{proof}

\vspace{0.2cm}
This linear program can be solved \emph{exactly} via Mathematica. In Figure~\ref{fig:LP},
we plot the  one-shot distillable entanglement for the $n$-fold isotropic state $\rho_F^{\ox n}$ with $d = 3$, $F=0.9$, and error tolerance $0.001$. The blocklength $n$ ranges from $1$ to $100$. We observe that even if we were able to coherently manipulate 100 copies of the states with the broad class of PPT assistance, the maximal distillation rate still could not reach the hashing bound $I(A\>B)_{\rho_F}$ which is asymptotically achievable under $\LLOCC$ operations. This demonstrates that the asymptotic bounds cannot provide helpful estimations in the practical scenario.

\begin{figure}[H]
    \centering
    \includegraphics[width = 8cm]{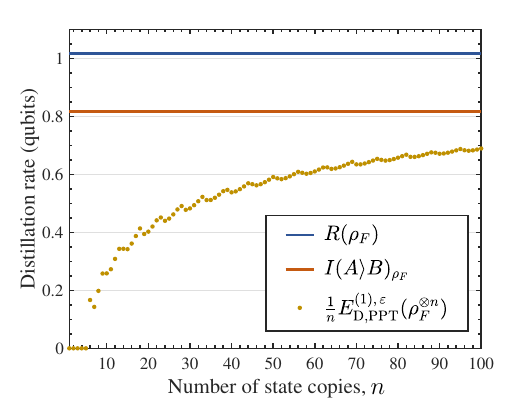}
    \caption[Linear program for distillable entanglement of isotropic state.]{The dotted line shows the exact value of distillation rate for $n$-fold isotropic state $\rho_F^{\ox n}$ with $F = 0.9$, local dimension $d = 3$. The error tolerance is taken at $\ve = 0.001$ and the blocklength $n$ ranges from $1$ to $100$. The solid line below is the  hashing bound, while the solid line above is the Rains bound.}
    \label{fig:LP}
\end{figure}

For the approximation of large blocklength distillation, we employ the second-order bounds in Theorem~\ref{second order theorem}. In Figure~\ref{fig:second_order}, we show the second-order estimation for $n$-fold isotropic state $\rho_F^{\ox n}$ with $d = 3$, $F=0.9$, and error tolerance $0.001$. In this figure we focus on the large blocklength ($n \geq 100$) regime and use a logarithmic scale for the horizontal axis. The second-order bounds in Theorem~\ref{second order theorem} are not tight, as expected, for isotropic states. But they provide a more refined estimation than the known asymptotic bounds. In Figure~\ref{fig:second_order}, the finite blocklength distillation rate lies between the two dashed lines, while the asymptotic rate lies between the two solid lines.

\begin{figure}[H]
    \centering
    \includegraphics[width = 8cm]{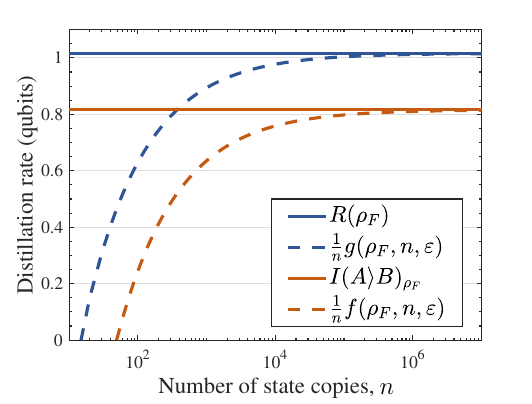}
     \caption[Second-order estimation for distillable entanglement of isotropic state.]{The two dashed lines show the second-order lower bound in Eq.~\eqref{f} and upper bound in Eq.~\eqref{g} for $n$-fold isotropic state $\rho_F^{\ox n}$ with $F = 0.9$, local dimension $d = 3$. The error tolerance is taken at $\ve = 0.001$ and the blocklength $n$ ranges from $10^2$ to $10^7$. The solid line below is the hashing bound, while the solid line above is the Rains bound.}
     \label{fig:second_order}
\end{figure}

An interesting observation is made when we present Figure~\ref{fig:LP} and Figure~\ref{fig:second_order} in a single plot. The linear program in Figure~\ref{fig:LP} is only implemented for $n$ less than $100$ due to the limited computational power. But we can use the curve fitting via least-squares  method and  construct an ansatz curve 
\begin{align}
c_1 + c_2 \frac{1}{\sqrt{n}} + c_3 \frac{\log n}{n} + c_4 \frac{1}{n},
\end{align}
which has the best fit to the series of points $\frac{1}{n}E_{\rm D,PPT}^{(1),\ve}(\rho_{F}^{\ox n})\ (1 \leq n \leq 100)$ in Figure~\ref{fig:LP}. Combining with the second-order upper bound in Figure~\ref{fig:second_order}, we get Figure~\ref{fig:LP_second_order}. It shows that for small $n$, the second-order upper bound does not give an accurate estimation since we ignore the term~$O\big(\frac{\log n}{n}\big)$. But for large $n$ ($\geq 10^2$), the fitting curve almost coincides with the second-order upper bound, demonstrating that the second-order upper bound works better for large blocklength. The convergence of the fitting curve indicates that  $E_{\rm D,PPT}(\rho_F) = R(\rho_F)$ for isotropic states $\rho_F$. It would be of great interest to find an analytical proof to this conjecture.

\begin{figure}[H]
    \centering
    \includegraphics[width = 8cm]{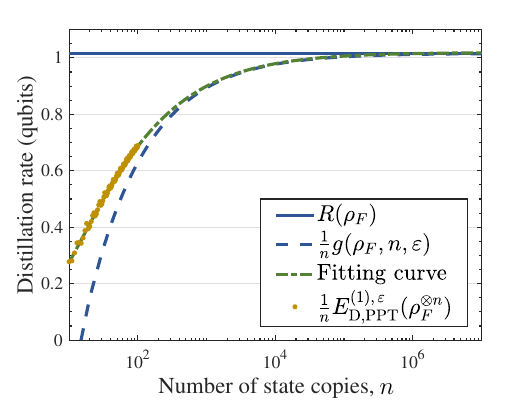}
    \caption[Curve-fitting for distillable entanglement of isotropic state.]{The dash-dotted line is the fitting curve of exact values of distillation rate for $n$-fold isotropic state $\rho_F^{\ox n}$ with $F = 0.9$, local dimension $d = 3$. The error tolerance is taken at $\ve = 0.001$. The dashed line is the second-order upper bound in Eq.~\eqref{g} and the solid line is the Rains bound.}
        \label{fig:LP_second_order}
\end{figure}

%%%%%%%%%%%%%%%%%%%%%%%%%%%%%%%%%%%%%%%%%%%%%%%%%%%%%
\section{Discussions}
We have provided both theoretical and numerical~\footnote{The codes for numerical calculations are available at \url{https://github.com/fangkunfred/entanglement-distillation}.} results for the entanglement distillation in the non-asymptotic regimes. Since entanglement distillation has become a central building block of quantum network proposals \cite{Dur1999,Childress2006,Gottesman2012,Nickerson2014}, our finite blocklength estimations could be applied as useful benchmarks for experimentalists to build a reliable quantum network in the future. Theoretically, we have obtained an exact characterization of the one-shot entanglement distillation under PPT operations in terms of the hypothesis testing relative entropy. This result not only leads to an improved understanding of the resource theory of entanglement, but also provides a potential approach to resolve the distillable entanglement under PPT operations or improve the Rains bound by taking other forms of feasible solution, for example, non-i.i.d. operators. 

% \kf{Check feedbacks from thesis examination and PRL}

\section*{Acknowledgments}
 We are grateful to Charles H. Bennett, Masahito Hayashi, Yinan Li, and  Andreas Winter for helpful discussions. We also thank Francesco Buscemi and Min-Hsiu Hsieh for reminding us of relevant results and references. 
 R.D., K.F. and X.W. were partly supported by the Australian Research Council, Grant No. DP120103776 and No. FT120100449.
 M.T. acknowledges an Australian Research Council Discovery Early Career Researcher Award, project No. DE160100821.

\bibliographystyle{IEEEtran}
%\clearpage
% \bibliographystyle{apsrev4-1}
\bibliography{TIT-Bib-DOI}

%%%%%%%%%%%%%%%%%%%%%%%%%%%%%%%%%%%%%%%%%%%%%%%%%%%%%%
\begin{appendices}
\section{An example for Theorem~\ref{EDPPT DH relation}}
\label{c not positive}

In this section, we give an explicit example to show that the optimal solution in the optimization $\min_{\|G^{T_B}\|_1 \leq 1, G = G^\dagger} D_H^{\ve}(\rho_{AB}\|G_{AB})$ is not taken at any positive operator $G_{AB}$. Specifically, we show a strict difference between the following two optimizations:
\begin{align}
    \text{OPT1} = \min_{\substack{\|G^{T_B}\|_1 \leq 1\\ G = G^\dagger}} D_H^{\ve}(\rho_{AB}\|G_{AB}), \quad \text{OPT2} = \min_{\substack{\|G^{T_B}\|_1 \leq 1\\ G \geq 0}} D_H^{\ve}(\rho_{AB}\|G_{AB}).
\end{align} 
Note that the dual SDP of the quantum hypothesis testing relative entropy is given by 
\begin{align}
    D_H^\ve(\rho_{AB}\|G_{AB}) = -\log \max \big\{\, \tr X + t(1-\ve) \,\big| \, G-X-t\rho \geq 0,\, X \leq 0,\, t \geq 0\, \big\}.
\end{align}
Thus we have the following SDPs:
\begin{align}
    \text{OPT1} & = -\log \max \big\{ \tr X + t(1-\ve) \,\big| \, G-X-t\rho \geq 0,\, X \leq 0,\, t \geq 0,\, \|G^{T_B}\|_1 \leq 1,\, G = G^\dagger \big\},\\
    \text{OPT2} & = -\log \max \big\{ \tr X + t(1-\ve) \,\big| \, G-X-t\rho \geq 0,\, X \leq 0,\, t \geq 0,\, \|G^{T_B}\|_1 \leq 1,\, G \geq 0 \big\}.
\end{align}

We implement these two SDPs for the quantum state $\rho_\theta = \frac{3}{4}\ket{\varphi_1}\bra{\varphi_1} + \frac{1}{4}\ket{\varphi_2}\bra{\varphi_2}$ with $\ket{\varphi_1} = \cos \theta \ket{00} + \sin \theta \ket{11}$ and $\ket{\varphi_2} = \ket{10}$. The difference between OPT1 and OPT2 is  shown in Figure \ref{C_Not_Positive}. The numerics is run via the solver SDPT3 which can be solved to a very high (near-machine) precision. The maximal gap in the plot is approximately $3.4\times 10^{-2}$.

\begin{figure}[H]
    \centering
    \includegraphics[width=8cm]{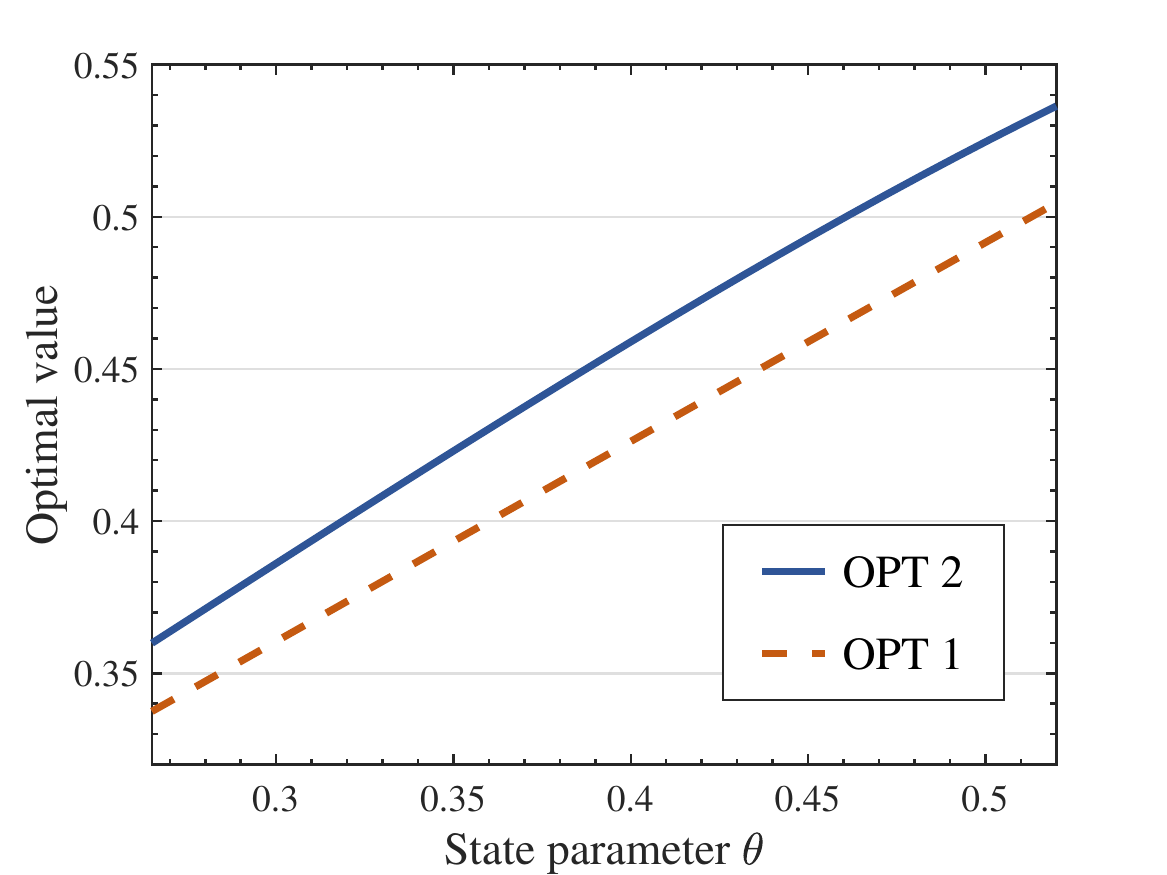}
    \caption{This figure demonstrates the difference of optimal value in OPT1 and OPT2 with respect to the state $\rho_\theta$. The solid line depicts the optimal value of OPT1 while the dashed line depicts the optimal value of OPT2. The parameter $\theta$ ranges from $\pi/12$ to $\pi/6$ and error tolerance is taken at $\ve = 1-\sqrt{3}/2$.}
    \label{C_Not_Positive}
\end{figure}

\section{Numerical estimation of Rains bound}
\label{Numerical estimation of Rains bound}

%%%%%%%%%%%%%%%%%%%%%%%%%%%%%%%%%%%%%%%%%%%%%%%%%%%%%%%%%%

\vspace{0.5cm}

In this section, we provide an algorithm to numerically compute the Rains bound with high accuracy. In particular, the calculation of upper and lower bounds of the Rains bound can have near-machine precision while the final result of Rains bound itself is within error tolerance $10^{-6}$ by default. This algorithm closely follows the approach in~\cite{Zinchenko2010,Girard2015} which intends to compute the PPT-relative entropy of entanglement.

Note that the only difference between the Rains bound and the PPT-relative entropy of entanglement is the feasible set. Due to the similarity between these two quantities, we can have a similar algorithm for the Rains bound. For the sake of completeness, we will restate the main idea of this algorithm and clarify that our adjustment will work to compute the Rains bound. In the following discussion, we will consider the natural logarithm, denoted as $\ln$, for convenience. 

The key idea for this algorithm is based on the cutting-plane method combined with semidefinite programming. Clearly, calculating the Rains bound is equivalent to the optimization problem 
\begin{align}
\label{rains equi}
\min_{\sigma_{AB} \in \Rset}( - \tr \rho_{AB} \ln \sigma_{AB}), \quad \text{with} \quad \Rset = \Big\{\,\sigma_{AB} \geq 0 \, \Big| \, \|\sigma_{AB}^{T_B}\|_1 \leq 1\,\Big\}.
\end{align} 
If we relax the minimization over all quantum states, the optimal solution is taken at $\sigma = \rho$. Thus $-\tr\rho \ln \rho$ provides a trivial lower bound on (\ref{rains equi}).
Since the objective function is convex with respect to $\sigma$ over the Rains set ($\Rset$), its epigraph is supported by tangent hyperplanes at every interior point $\sigma^{(i)} \in \text{int}\, \Rset$. Thus we can construct a successively refined sequence of approximations to the epigraph of the objective function restricted to the interior of the Rains set. 

Specifically, for an arbitrary positive definite operator $X$, we have a spectral decomposition $X = U_X \text{diag}(\lambda_X)U_X^\dagger$ with unitary matrix $U_X$ and diagonal matrix $\text{diag}(\lambda_X)$ formed by the eigenvalues~$\lambda_X$. Then we have the first-order expansion 
\begin{equation}
    \ln (X+\Delta) = \ln X + U_X \left[D(\lambda_X)\circ U_X^\dagger \Delta U_X \right]U_X^\dagger + O(\|\Delta\|^2),
\end{equation}
where $D(\lambda)$ is the Hermitian matrix given by 
\begin{equation}
    D(\lambda)_{i,j} = \begin{cases}
    (\ln \lambda_i - \ln \lambda_j)/(\lambda_i - \lambda_j) , & \lambda_i \neq \lambda_j,\\
    1/\lambda_i, & \lambda_i = \lambda_j.
    \end{cases}
\end{equation}

For any given set of feasible points $\{\sigma^{(i)}\}_{i=0}^N \subset \text{int}\, \Rset$, we have spectral decompositions $\sigma^{(i)} = U_{(i)} \text{diag}(\lambda^{(i)})U_{(i)}^\dagger$. Then $\text{epi} (-\tr \rho \ln \sigma)|_{\text{int}\, \Rset}$ is a subset of all $(\sigma,t) \in \text{int}\,\Rset \times \RR$ satisfying 
\begin{align}
\label{algorithm ineq}
 -\tr \rho \left\{\ln \sigma^{(i)} + U_{(i)} \left[D(\lambda^{(i)})\circ U_{(i)}^\dagger (\sigma - \sigma^{(i)})U_{(i)} \right]U_{(i)}^\dagger \right\} \leq t,\ \text{for}\ i = 0,\cdots,N.\end{align}
Equivalently, we can introduce slack variable $s_i$ on the l.h.s. of Eq. (\ref{algorithm ineq}) and have 
\begin{equation}
    \tr E^{(i)} \sigma + t - s_i = -\tr \rho \ln \sigma^{(i)} + \tr E^{(i)} \sigma^{(i)},\, s_i \geq 0,\ \text{for}\ i = 0,\cdots,N,
\end{equation}
where $E^{(i)} = U_{(i)} \left[D(\lambda^{(i)})\circ U_{(i)}^\dagger \rho U_{(i)} \right]U_{(i)}^\dagger$.
So the optimal value of optimization problem \begin{align}\min \Big\{\,t \,\Big|\, \tr E^{(i)} \sigma + t - s_i = -\tr \rho \ln \sigma^{(i)} + \tr E^{(i)} \sigma^{(i)},\, s_i \geq 0,\, i = 0,\cdots,N,\, \sigma \in \Rset \,\Big\}\end{align}
provides a lower bound on (\ref{rains equi}). For any feasible point $\sigma^* \in \Rset$, $-\tr \rho \ln \sigma^*$ provides an upper bound on (\ref{rains equi}). For each iteration of the algorithm, we add a interior point $\sigma^{(N+1)}$ of the Rains set to the set $\left\{\sigma^{(i)}\right\}_{i=0}^N$, which may lead to a tighter lower bound and update the feasible point $\sigma^*$ if $\sigma^{(N+1)}$ provides a tighter upper bound. We use the variables $\overline{R}$  and  $\underline{R}$ to store the upper and lower bounds. Since $\underline{R}$ and $\overline{R}$  are nondecreasing and nonincreasing, respectively, at each iteration, we can terminate the algorithm when $\underline{R}$ and $\overline{R}$ are close enough, for example, less than given tolerance $\ve$.
The full algorithm is presented in Algorithm \ref{rains bound algorithm}.

\begin{algorithm}
\caption{Rains bound algorithm}\label{rains bound algorithm}
\begin{algorithmic}[1]
\State \textbf{Input:}  bipartite state $\rho_{AB}$ and dimensions of subsystem $d_A$, $d_B$
\State \textbf{Output:} Upper bound $\overline{R}$, lower bound $\underline{R}$

\If{$\rho \in \Rset$}
\State \textbf{return}  $\underline{R} = \overline{R} = 0$
\Else
\State \textbf{initialize} $\ve = 10^{-6}$, $N=0$, $\sigma^* = \sigma^{(0)} = \1_{AB}/(d_Ad_B)$, $\underline{R} = -\tr \rho\ln \rho$, $\overline{R} = -\tr \rho \ln \sigma^*$
\While{$\overline R - \underline R \geq \ve$}
\State \label{loop} \textbf{solve} $\min \left\{t \,| \tr E^{(i)} \sigma + t - s_i = -\tr \rho \ln \sigma^{(i)} + \tr E^{(i)} \sigma^{(i)}, s_i \geq 0,\, i = 0,\cdots,N, t \geq \underline{R}, \sigma \in \Rset \right\}$

\State store optimal solution $(\underline{t},\underline{\sigma})$ and \text{update lower bound} $\underline{R} = \underline{t}$
\If{\text{the gap between upper and lower bound is within given tolerance,} $\overline{R} - \underline{R} \leq \ve$}
\State \textbf{return} $\underline{R}$, $\overline{R}$ 
\Else
\State \text{add one more point} \label{update tangent point} $\sigma^{(N+1)}$, and \textbf{set} $N=N+1$
\If{$-\tr \rho \ln \sigma^{(N)} \leq -\tr \rho \ln \sigma^*$}
\State \textbf{update} feasible point $\sigma^* = \sigma^{(N)}$, and upper bound  $\overline{R} = -\tr \rho \ln \sigma^*$
\EndIf
\EndIf
\EndWhile
\EndIf 
\end{algorithmic}
\end{algorithm}
 
The following lemma ensures that $\sigma \in \Rset$ can be expressed as semidefinite conditions.

\begin{lemma}
\label{PPT semidefinite condition}
  The condition $\sigma \in \Rset$ holds if and only if $\sigma \geq 0$ and there exist operators $\sigma_+$, $\sigma_- \geq 0$ such that $\sigma^{T_B} = \sigma_+ - \sigma_-$ and $\tr(\sigma_+ + \sigma_-) \leq 1$.
\end{lemma}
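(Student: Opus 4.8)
The plan is to unwind the definition $\text{PPT'}(A:B) = \left\{M \in \cP(A\ox B): \left\|M^{T_B}\right\|_1 \leq 1\right\}$ and recognize that the nontrivial content is just the standard characterization of the trace-norm unit ball for Hermitian operators. Since $\sigma \geq 0$ appears on both sides of the claimed equivalence, I would fix $\sigma \geq 0$ throughout and set $X := \sigma^{T_B}$, observing that $X$ is Hermitian because partial transposition preserves Hermiticity (one checks directly from $(\ketbra{i_Aj_B}{k_Al_B})^{T_B}=\ketbra{i_Al_B}{k_Aj_B}$ that $T_B$ maps Hermitian operators to Hermitian operators). The statement then reduces to showing, for Hermitian $X$, that $\|X\|_1 \leq 1$ if and only if $X = \sigma_+ - \sigma_-$ for some $\sigma_+, \sigma_- \geq 0$ with $\tr(\sigma_+ + \sigma_-) \leq 1$.

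For the forward direction I would invoke the Jordan (Hahn) decomposition of a Hermitian operator: taking the spectral decomposition of $X$ and collecting the positive and negative parts of the spectrum into $X_+$ and $X_-$ gives $X = X_+ - X_-$ with $X_+, X_- \geq 0$ supported on mutually orthogonal subspaces, so that $|X| = X_+ + X_-$ and hence $\tr(X_+ + X_-) = \tr|X| = \|X\|_1 \leq 1$. Choosing $\sigma_\pm = X_\pm$ then furnishes the required operators, and pulling $X = \sigma^{T_B}$ back recovers the decomposition $\sigma^{T_B} = \sigma_+ - \sigma_-$.

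For the reverse direction I would use the triangle inequality for the trace norm together with the elementary identity $\|P\|_1 = \tr P$, valid for any $P \geq 0$. Given an arbitrary decomposition $X = \sigma_+ - \sigma_-$ with $\sigma_\pm \geq 0$ and $\tr(\sigma_+ + \sigma_-) \leq 1$, this yields
\begin{equation}
\left\|\sigma^{T_B}\right\|_1 = \|\sigma_+ - \sigma_-\|_1 \leq \|\sigma_+\|_1 + \|\sigma_-\|_1 = \tr \sigma_+ + \tr \sigma_- = \tr(\sigma_+ + \sigma_-) \leq 1,
\end{equation}
so that $\sigma \in \text{PPT'}$, completing the equivalence.

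I do not anticipate a genuine obstacle; the only point requiring care is that the decomposition $X = \sigma_+ - \sigma_-$ is far from unique and that a generic feasible decomposition gives only the one-sided bound $\|X\|_1 \leq \tr(\sigma_+ + \sigma_-)$, with equality attained precisely by the Jordan decomposition. The forward direction exhibits this extremal decomposition realizing the trace norm, while the reverse direction shows that every feasible decomposition is consistent with membership in the Rains set, and the two halves together yield the \textbf{if and only if}. The value of the reformulation is operational rather than conceptual: it replaces the norm constraint $\|\sigma^{T_B}\|_1 \leq 1$ by the linear/semidefinite constraints $\sigma \geq 0$, $\sigma_\pm \geq 0$, $\sigma^{T_B} = \sigma_+ - \sigma_-$, $\tr(\sigma_+ + \sigma_-) \leq 1$, which is exactly what is needed so that ``$\sigma \in \text{PPT'}$'' can be imposed inside the SDP of Algorithm \ref{rains bound algorithm}.
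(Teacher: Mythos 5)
Your proof is correct and follows essentially the same route as the paper's: the forward direction uses the Jordan/spectral decomposition of $\sigma^{T_B}$ into orthogonally supported positive and negative parts so that the trace of their sum equals the trace norm, and the reverse direction uses the triangle inequality together with $\|P\|_1 = \tr P$ for $P \geq 0$. Your additional remarks on Hermiticity preservation under $T_B$ and on the non-uniqueness of the decomposition are correct elaborations of points the paper leaves implicit.
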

\begin{proof}
  If $\sigma \in \Rset$, then $\sigma \geq 0$. Use the spectral decomposition $\sigma^{T_B} = \sigma_+ - \sigma_-$, where $\sigma_+ $ and $\sigma_-$ are positive operator with orthogonal support. Then $ \left|\sigma^{T_B} \right| = \sigma_+ + \sigma_-$ and $\tr(\sigma_+ + \sigma_-) = \|\sigma^{T_B} \|_1 \leq 1$. On the other hand, if there exist positive operators $\sigma_+$ and $\sigma_-$ such that $\sigma^{T_B} = \sigma_+ - \sigma_-$ and $\tr(\sigma_+ + \sigma_-) \leq 1$, then $\|\sigma^{T_B}\|_1 = \|\sigma_+ - \sigma_-\|_1 \leq \|\sigma_+\|_1 + \|\sigma_-\|_1 = \tr(\sigma_+ + \sigma_-) \leq 1$. Thus $\sigma \in \Rset$.
\end{proof}

\vspace{0.3cm}
For given $\left\{\sigma^{(i)}\right\}_{i=0}^N$, Step \ref{loop} in Algorithm \ref{rains bound algorithm} is an SDP which can be explicitly written as
\begin{equation}\begin{split}
\label{algorithm SDP}
 \min \ & t \\ 
   \text{s.t.}\ & \tr E^{(i)} \sigma + t - s_i = - \tr \rho  \ln \sigma^{(i)} + \tr E^{(i)} \sigma^{(i)},\ i = 0, \cdots, N,\\
   &  t \geq \underline{R}, \ s_i \geq 0,\ i = 0, \cdots, N,\\
   &  \sigma, \sigma_+, \sigma_- \geq 0,\ \sigma^{T_B} = \sigma_+ - \sigma_-,\ \tr(\sigma_+ + \sigma_-) \leq 1.
\end{split}\end{equation}
As for Step \ref{update tangent point}, variable $\sigma^{(N+1)}$ can be given by \begin{align}
\label{next point}
\sigma^{(N+1)} = \text{arg} \min \big\{\,-\tr \rho \ln \sigma \,\big|\, \sigma = \alpha Z + (1-\alpha)\underline{\sigma}, \alpha \in [0,1]\,\big\},\end{align}
where $Z$ is some fixed reference point. 
This one-dimensional minimization can be efficiently performed using the standard derivative-based bisection scheme \cite{Zinchenko2010}.

As a by-product, the above algorithm can be used to check the nonadditivity of the Rains bound, which has been recently proved in Ref.~\cite{Wang2016c}. We also consider the states $\rho_r$ defined in Ref.~\cite{Wang2016c}. 
Denote $\underline{R}_1$ the lower bound computed by our algorithm for $R(\rho_r)$ and $\overline{R}_2$ the upper bound computed by our algorithm for $R(\rho_r^{\ox 2})$. 
In Figure \ref{rains nonadd}, we can clearly observe that there is a strict gap between $\overline{R}_2$ and $2 \underline{R}_1$, which implies $R(\rho_r^{\ox 2}) \leq \overline{R}_2 < 2 \underline{R}_1 \leq 2 R(\rho_r)$. 
Note that $\underline{R}_1$ and $\overline{R}_2$ only depend on the SDPs in Eqs.~\eqref{algorithm SDP} and~\eqref{next point}, both of which can be solved to a very high (near-machine) precision, while the maximal gap in the plot is approximately $10^{-2}$. Thus our algorithm provides direct numerical evidence (not involving any other entanglement measures) for the nonadditivity of the Rains bound.
\begin{figure}[H]
    \centering
    \includegraphics[width=8cm]{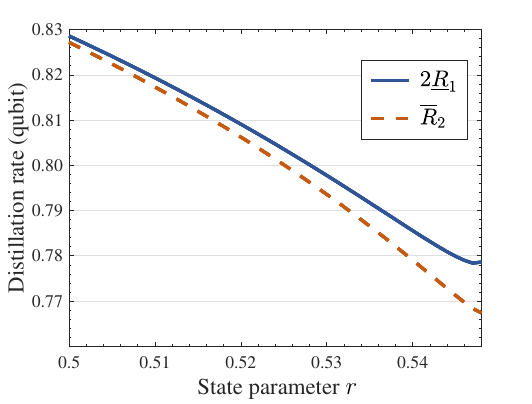}
    \caption{This figure demonstrates the difference between the lower bound $2\underline{R}_1$ on $2R(\rho_r)$ and the upper bound $\overline{R}_2$ on $R(\rho_r^{\ox 2})$. The solid line depicts $2\underline{R}_1$ while the dashed line depicts $\overline{R}_2$.}
    \label{rains nonadd}
\end{figure}

\vspace{0.2cm}
\begin{remark}
After the completion of this work, we notice that there is another approach to efficiently calculating the Rains bound in Refs.~\cite{Fawzi2017a,Fawzi2017}. In these works, the authors make use of rational (Pad\'{e}) approximations of the (matrix) logarithm function and then transform the rational functions to SDPs. Without the successive refinement, their algorithm can be much faster with relatively high accuracy. However, our algorithm is efficient enough in low-dimensional cases.
\end{remark}

\end{appendices}

\end{document}